\theoremstyle{definition}
\newtheorem{defn}{Definition}
\theoremstyle{proposition}
\newtheorem{prop}{Proposition}
\theoremstyle{corollary}
\newtheorem{cor}{Corollary}
\theoremstyle{theorem}
\theoremstyle{lemma}
\newtheorem{lem}{Lemma}
\newtheorem*{remark}{Remark}
\begin{document}
\SetBgContents{}
\title{Mobility Increases the Data Offloading Ratio in D2D Caching Networks}
\author{\IEEEauthorblockN{Rui Wang$^*$, Jun Zhang$^*$, S.H. Song$^*$ and K. B. Letaief$^*$$^\dag$, \emph{Fellow, IEEE} }
	\IEEEauthorblockA{$^*$Dept. of ECE, The Hong Kong University of Science and Technology, $^\dag$Hamad Bin Khalifa University, Doha, Qatar\\
		Email: $^*$\{rwangae, eejzhang, eeshsong, eekhaled\}@ust.hk, $^\dag$kletaief@hbku.edu.qa} \thanks{This work was supported by the Hong Kong Research Grants Council under Grant No. 610113.}}
\maketitle
\begin{abstract}

Caching at mobile devices, accompanied by device-to-device (D2D) communications, is one promising technique to accommodate the exponentially increasing mobile data traffic. While most previous works ignored user mobility, there are some recent works taking it into account. However, the duration of user contact times has been ignored, making it difficult to explicitly characterize the effect of mobility. In this paper, we adopt the alternating renewal process to model the duration of both the contact and inter-contact times, and investigate how the caching performance is affected by mobility. The \emph{data offloading ratio}, i.e., the proportion of requested data that can be delivered via D2D links, is taken as the performance metric. We first approximate the distribution of the \emph{communication time} for a given user by beta distribution through moment matching. With this approximation, an accurate expression of the data offloading ratio is derived. For the homogeneous case where the average contact and inter-contact times of different user pairs are identical, we prove that the data offloading ratio increases with the user moving speed, assuming that the transmission rate remains the same. Simulation results are provided to show the accuracy of the approximate result, and also validate the effect of user mobility.
\end{abstract}
\IEEEpeerreviewmaketitle

\section{Introduction}

The mobile data traffic is growing at an exponential rate, among which mobile video accounts for more than a half \cite{forecast2016cisco}. Caching popular contents at helper nodes or user devices is a promising approach to reduce the data traffic on the backhaul links, as well as improving the user experience of video streaming applications  \cite{d2d-cache,jcache}. In comparison with the commonly considered femto-caching system
, caching at devices enjoys a unique advantage, i.e., the devices' aggregate caching capacity grows with the number of devices \cite{d2d-cache}. Moreover, device caching can promote device-to-device (D2D) communications, where nearby mobile devices may communicate directly rather than being forced to communicate through the base station (BS)\cite{design}. 

Recently, caching in D2D networks has attracted lots of attentions. In \cite{scaling}, the scaling behavior of the number of D2D collaborating links was identified. Three concentration regimes, classified by the concentration of the file popularity, were investigated. The outage-throughput tradeoff and optimal scaling laws of both the throughput and outage probability were studied in \cite{tradeoff}. Two coded caching schemes, i.e., centralized and decentralized, were proposed in \cite{fundamentallimits}, where the contents are delivered via broadcasting.

So far, an important characteristic of mobile users, i.e., the user mobility,  has been ignored in previous studies of D2D caching networks. There are some works starting to consider the effect of user mobility. Effective methodologies to utilize the user mobility information in caching design were discussed in \cite{magmobility}. 
In \cite{mobilitycodedcaching}, the effect of mobility was evaluated in D2D networks with coded caching, with the conclusion that mobility can improve the scaling law of throughput. This result was based on the assumption that the user locations are random and independent in each time slot, which failed to take into account the temporal correlation. 

The inter-contact model, which considers the temporal correlation of the user mobility,  has been widely applied \cite{exintercontactmodel}, where the timeline for an arbitrary pair of mobile users are divided into \emph{contact times} and \emph{inter-contact times}. Specifically, the \emph{contact times} denote the time intervals when the mobile users are located within the transmission range. Correspondingly, the \emph{inter-contact times} denote the time intervals between contact times \cite{pocket}. This model has been used to develop device caching schemes to exploit the user mobility pattern in \cite{mobilitycaching}. The throughput-delay scaling law was developed by characterizing the inter-contact pattern of the random walk model \cite{scalingmobility}. In these works, it was assumed that a fixed amount of data can be delivered within one contact time, while the duration of the contact times was not considered. However, as the user moving speed will affect the durations of both the contact and inter-contact times, it is critical to account for their effects when investigating the impact of user mobility on caching performance.

In this paper, we shall analytically evaluate the effect of mobility in D2D caching networks, by adopting an alternating renewal process to model the mobility pattern so that both the contact and inter-contact times are accounted for. The \emph{data offloading ratio}, which is defined as the proportion of data that can be obtained via D2D links, is adopted as the performance metric. The main contribution is an approximate expression for the data offloading ratio, for which the main difficulty is to deal with multiple alternating renewal processes. We tackle it by first deriving the expectation and variance of the \emph{communication time} of a given user, and then use a beta random variable to approximate it by moment matching. Furthermore, we investigate the effect of mobility in a homogeneous case, where the average contact and inter-contact times for all the user pairs are the same. In the low-to-medium mobility scenario, by assuming that the transmission rate is irrelevant to the user speed, it is proved that the data offloading ratio increases with the user speed for any caching strategy that does not cache the same contents at all devices. Simulation results validates the accuracy of the derived expression, as well as the effect of the user mobility.

\section{System Model and Performance Metric}

In this section, we will first introduce the alternating renewal process to model the user mobility pattern, and discuss the caching and file delivery models. Then, the performance metric, i.e., the data offloading ratio, will be defined.

\subsection{User Mobility Model}
\begin{figure}[!t]
  \centering
  \includegraphics[width=3in]{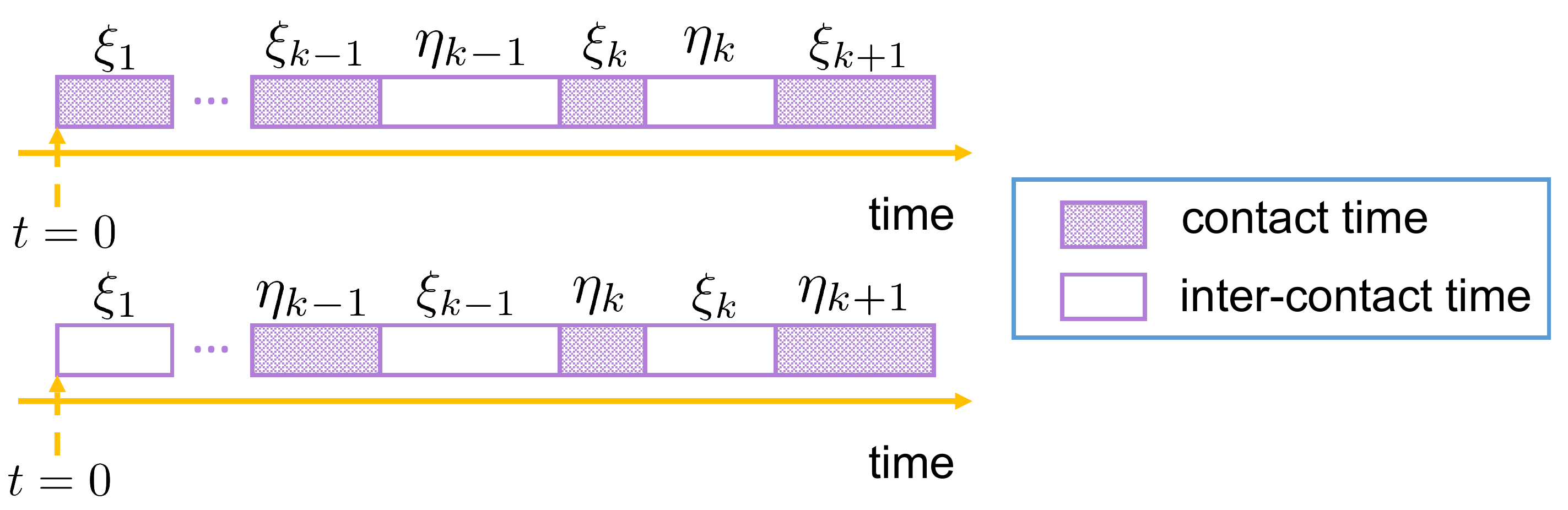}
  \caption{The timeline for an arbitrary pair of mobile users.}
  \label{intercontact}
\end{figure}

The inter-contact model, which captures the temporal correlation of the user mobility \cite{exintercontactmodel}, is used to model the user mobility pattern. Specifically, the timeline of each pair of users is divided into  \emph{contact times}, i.e, the times when the users are in the transmission range, and \emph{inter-contact times}, i.e., the times between consecutive contact times. Considering that contact times and inter-contact times appear alternatively in the timeline of a pair of users, similar to \cite{renewalmodel}, an alternating renewal process is applied to model the pairwise contact pattern, as defined below \cite{renewalprocess}.

\begin{defn}
Consider a stochastic process with state space $\{A,B\}$, and the successive durations for the system to be in states $A$ and $B$ are denoted as $\xi_k,k=1,2,\cdots$ and $\eta_k,k=1,2,\cdots$, respectively, which are i.i.d.. Specifically, the system starts at state $A$ and remains for $\xi_1$, then switches to state $B$ for $\eta_1$, then backs to state $A$ for $\xi_2$, and so forth. 
Let $\psi_k=\xi_k+\eta_k$. The counting process of $\psi_k$ is called as an \emph{alternating renewal process}.
\end{defn}
As shown in Fig. \ref{intercontact}, if the pair of users is in contact at $t=0$, $\xi_k$ and $\eta_k$ represent the contact times and inter-contact times, respectively; otherwise, $\xi_k$ and $\eta_k$ represent the inter-contact times and contact times, respectively. It was shown in \cite{exintercontact} that exponential curves well fit the distribution of inter-contact times, while in \cite{excontact}, it was identified that exponential distribution is a good approximation for the distribution of the contact times. Thus, same as \cite{renewalmodel}, we assume that the contact times and inter-contact times follows independent exponential distributions. For simplicity, the timelines of different user pairs are assumed to be independent. Specifically, we consider $N_u$ users in a network, and the index set of the users is denoted as $\mathcal{S}=\{1,2,\cdots,N_u\}$. The contact times and inter-contact times of users $i  \in \mathcal{S}$ and $j \in \mathcal{S} \backslash \{i\}$ follow independent exponential distributions with parameters $\lambda^C_{i,j}$ and $\lambda^I_{i,j}$, respectively.
\subsection{Caching and File Transmission Model}
\begin{figure}[!t]
	\centering
	\includegraphics[width=2.6in]{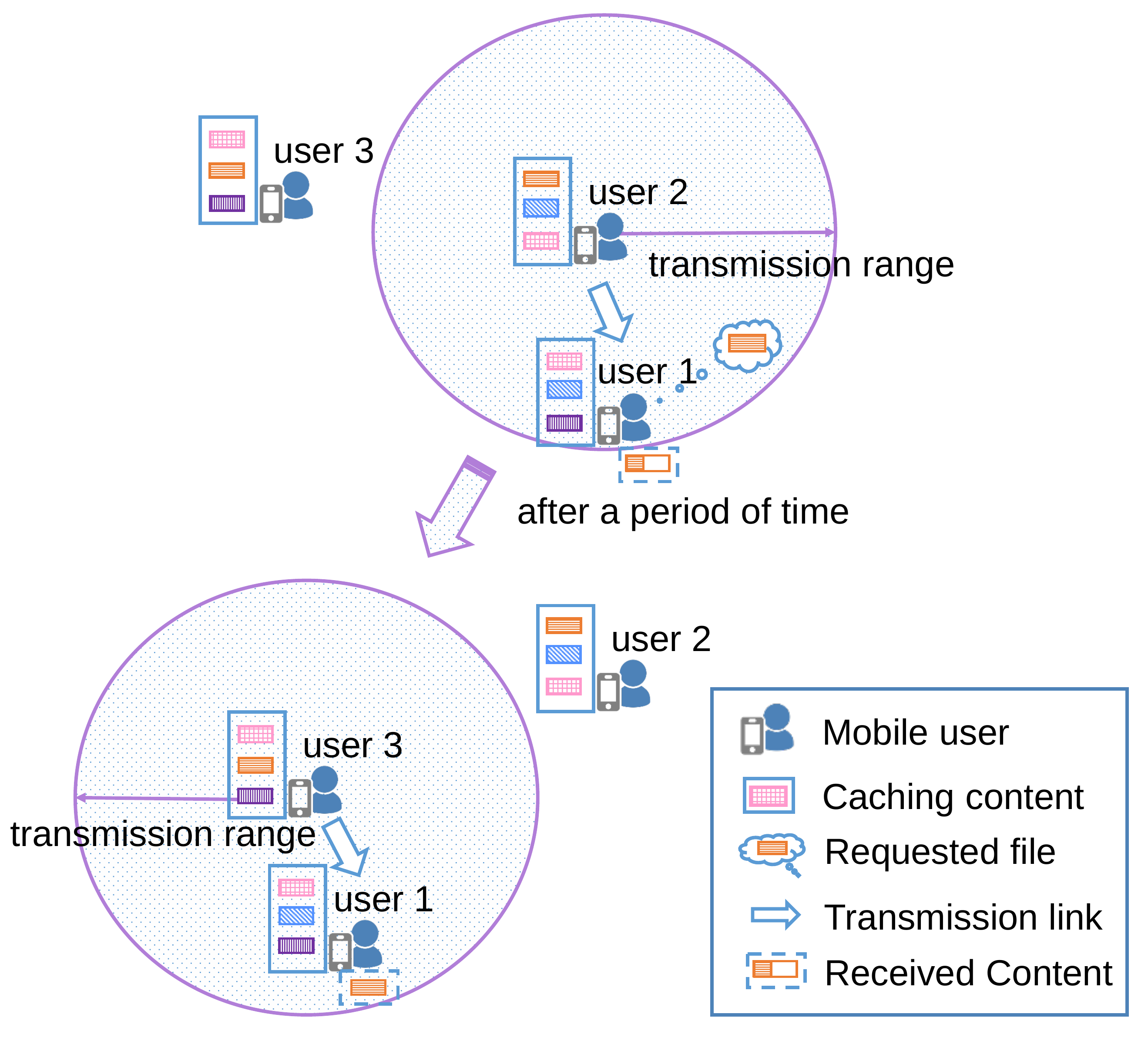}
	\caption{A sample network with three mobile users.}
	\label{model}
\end{figure}

There is a library with $N_f$ files, whose index set is denoted as $\mathcal{F}=\{1,2,\cdots,N_f\}$, each with size $C$. Each user device has a limited storage capacity, and each file can be completely cached or not cached at all at each user device. Specifically, the caching placement is denoted as
\begin{equation}
x_{j,f}=
\begin{cases}
1, \text{if user $j$ caches file $f$}, \\
0, \text{if user $j$ does not cache file $f$},
\end{cases}
\end{equation} 
where $j \in \mathcal{S}$ and $f \in \mathcal{F}$. User $i \in \mathcal{S}$, is assumed to request a file $f \in \mathcal{F}$ with probability $p^r_{i,f}$, where $\sum \limits_{f \in \mathcal{F} } p^r_{i,f} =1$. When a user requests a file $f$, it will first check its own cache, and then download the file from the users that are in contact and store file $f$, with a fixed transmission rate, denoted as $R$. If the user cannot get the whole file within a certain delay threshold, denoted as $T^d$, it will download the remaining part from the BS. We assume that the delay threshold is larger than the time required to download each content, i.e., $T^d>\frac{C}{R}$. Fig. \ref{model} shows a sample network, where user $1$ gets part of the requested file during the contact time with user $2$, then gets the whole file after the contact time with user $3$. 

\subsection{Performance metric}
\begin{figure}[!t]
	\centering
	\includegraphics[width=3.5in]{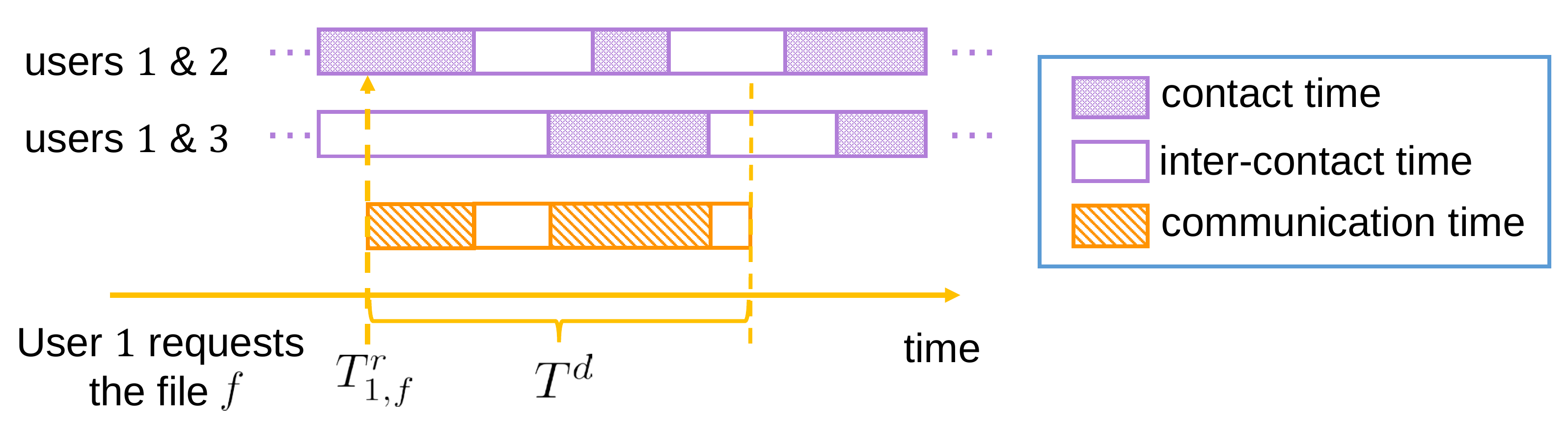}
	\caption{An illustration of the communication time.}
	\label{transmissiontime}
\end{figure}

The \emph{data offloading ratio}, which is defined as the percentage of requested content that can be obtained via D2D links rather than downloading from the BS, is used as the performance metric. Specifically, the data offloading ratio for user $i \in \mathcal{S}$ is defined as
\begin{equation}
\mathcal{P}_i=\sum \limits_{f \in \mathcal{F}} p^r_{i,f} \left\{ x_{i,f} + \frac{  (1-x_{i,f})\mathbb{E} _{D_{i,f}}\left[ \min \left( D_{i,f} ,C \right) \right]}{C} \right\},
\end{equation}
where $D_{i,f}$ denotes the amount of requested data that can be delivered via D2D links when user $i$ requests file $f$. Since a fixed transmission rate is assumed, $D_{i,f}$ can be written as $D_{i,f}=R T^c_{i,f}$, where $T^c_{i,f}$ is the \emph{communication time} for user $i$ to download file $f$ from other users caching file $f$ within time $T^d$. We assume that user $i$ can download file $f$ while at least one user caching file $f$ is in contact, where the handover time is ignored. Fig. \ref{transmissiontime} shows the communication time of user $1$ in Fig. \ref{model}. Then, the average data offloading ratio is
\begin{align} \label{define_ratio}
&\mathcal{P}= \notag  \\
&\frac{1}{N_u}\sum \limits_{i \in \mathcal{S}} \sum \limits_{f \in \mathcal{F}} p^r_{i,f} \left\{ x_{i,f} + \frac{ (1-x_{i,f}) \mathbb{E}_{T^c_{i,f} } \left[ \min \left( R T^c_{i,f} ,C \right) \right]}{C} \right\}.
\end{align}

In the following, we will evaluate the data offloading ratio given in (\ref{define_ratio}) for any given caching strategy, and investigate the effect of user mobility on caching performance. 

\section{Data Offloading Ratio Analysis}
The main difficulty of evaluating the data offloading ratio is to find the distribution of the communication time. As this distribution is highly complicated, instead of deriving it directly, we will develop an accurate approximation. In this section, we will first approximate the distribution of the communication time by a beta distribution, and then, an approximation of the data offloading ratio will be obtained.

\subsection{Communication time analysis}
To help analyze the communication time, we first define some stochastic processes.
\begin{defn}
Define $\mathbf{H}_{i,j}$, where $i \in \mathcal{S}$ and $j \in \mathcal{S} \backslash \{i\}$, as the continuous-time random process, i.e., $\mathbf{H}_{i,j}=\{H_{i,j}(t), t \in (0,\infty)\}$ with state space $\{0,1\}$, where $H_{i,j}(t)=1$ means that users $i$ and $j$ are in contact at the time instant $t$; otherwise $H_{i,j}(t)=0$. The durations of staying in states $1$ and $0$ follow i.i.d. exponential distributions with parameter $\lambda^C_{i,j}$ and $\lambda^I_{i,j}$, respectively.  
\end{defn}
\begin{defn}
Define $\mathbf{H}_i^f$, where $i \in \mathcal{S}$ and $f \in \mathcal{F}$, as the continuous-time random process, i.e., $\mathbf{H}^f_{i}=\{H^f_{i}(t), t \in (0,\infty)\}$ with state space $\{0,1\}$, where $H^f_{i}(t)=1$ means that users $i$ can download file $f$ from any other user caching file $f$ at time instant $t$; otherwise $H^f_{i}(t)=0$.
\end{defn}
At time $t$, since user $i$ can download file $f$ when at least one user caching file $f$ is in contact, we get $H^f_i(t)=1- \prod \limits_{j \in \mathcal{S} \backslash \{i\},x_{j,f}=1}\left[1-H_{i,j}(t) \right]$. Similar to \cite{renewalmodel}, it is reasonable to assume that when a user requests a file, the alternating process between each pair of users has been running for a long time. Thus, denote $T^r_{i,f}$, $i \in \mathcal{S}$ and $f \in \mathcal{F}$, as the time of user $i$ requests file $f$, and the communication time $T^c_{i,f}$ can be derived as
$T^c_{i,f}= \lim \limits_{T^r_{i,f} \to \infty} \int _{T^r_{i,f}}^{T^r_{i,f}+T^d} H^f_i(t) dt.$
In the following, we will derive the expectation and variance of the communication time.

\begin{lem} \label{ev}
When user $i \in \mathcal{S}$ requests file $f \in \mathcal{F}$, which is not stored at its own cache, the expectation and variance of its communication time is
\begin{equation} \label{expectation_t}
\mathbb{E}\left[T^c_{i,f} \right]=T^d\left( 1-\prod \limits_{j \in \mathcal{S}, x_{j,f}=1} \frac{\lambda^C_{i,j}}{\lambda^C_{i,j}+\lambda^I_{i,j}} \right).
\end{equation}
and
\begin{align} \label{var_t}
\mathrm{Var} \left[ T^c_{i,f} \right] = & 2 \int_0^{T^d} (T^d-u) \prod \limits_{j \in \mathcal{S}, x_{j,f}=1} \frac{\lambda^C_{i,j}}{(\lambda^I_{i,j}+\lambda^C_{i,j})^2} \notag \\
&\times \left[ \lambda^C_{i,j} + \lambda^I_{i,j} e^{-u(\lambda_{i,j}^C+\lambda^I_{i,j})} \right] du \notag  \\
&-(T^d)^2 \prod \limits_{j \in \mathcal{S}, x_{j,f}=1} \left(\frac{\lambda^C_{i,j}}{\lambda^C_{i,j}+\lambda^I_{i,j}} \right)^2.
\end{align}
\end{lem}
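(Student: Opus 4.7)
The plan is to reduce everything to two-state CTMC computations by exploiting two structural facts: first, the pairwise processes $\{\mathbf{H}_{i,j}\}_{j \neq i}$ are independent two-state Markov chains with exponential holding times; second, the ``availability'' process is the simple function $H^f_i(t) = 1 - \prod_{j: x_{j,f}=1}[1 - H_{i,j}(t)]$. Because the alternating renewal processes are assumed to have been running long enough, each $H_{i,j}$ is stationary, so I may re-index time and write $T^c_{i,f} = \int_0^{T^d} H^f_i(t)\,dt$ without loss of generality.

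For the expectation, I would apply Fubini to obtain $\mathbb{E}[T^c_{i,f}] = \int_0^{T^d} \mathbb{P}(H^f_i(t)=1)\,dt$. The product representation and independence across $j$ give $\mathbb{P}(H^f_i(t) = 0) = \prod_{j: x_{j,f}=1} \mathbb{P}(H_{i,j}(t) = 0)$. The stationary distribution of the two-state CTMC with holding-time rates $\lambda^C_{i,j}$ in state $1$ and $\lambda^I_{i,j}$ in state $0$ is $\pi_0 = \lambda^C_{i,j}/(\lambda^C_{i,j}+\lambda^I_{i,j})$, which yields (\ref{expectation_t}) immediately.

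For the variance, the starting point is $\mathbb{E}[(T^c_{i,f})^2] = \int_0^{T^d}\!\!\int_0^{T^d} \mathbb{E}[H^f_i(s)H^f_i(t)]\,ds\,dt = 2\int_0^{T^d}(T^d-u)\,\mathbb{E}[H^f_i(0)H^f_i(u)]\,du$, using stationarity and the change of variable $u = |t-s|$. Expanding the product as $H^f_i(0)H^f_i(u) = 1 - \prod_j[1-H_{i,j}(0)] - \prod_j[1-H_{i,j}(u)] + \prod_j[1-H_{i,j}(0)][1-H_{i,j}(u)]$ and taking expectations factorwise (by independence across $j$), I need the joint probability $\mathbb{P}(H_{i,j}(0)=0, H_{i,j}(u)=0) = \pi_0 \cdot P_{00}(u)$, where the two-state CTMC transition probability is the standard $P_{00}(u) = [\lambda^C_{i,j} + \lambda^I_{i,j} e^{-(\lambda^C_{i,j}+\lambda^I_{i,j})u}]/(\lambda^C_{i,j}+\lambda^I_{i,j})$. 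Substituting yields the integrand in (\ref{var_t}), and the first two terms in the expansion reproduce the constant contributions $(T^d)^2$ and $-2(T^d)^2 \prod_j \pi_0$. Finally, subtracting $(\mathbb{E}[T^c_{i,f}])^2 = (T^d)^2(1-\prod_j \pi_0)^2$ cancels the $(T^d)^2$ and $-2(T^d)^2\prod_j \pi_0$ pieces and leaves the $-(T^d)^2(\prod_j \pi_0)^2$ correction displayed in the lemma.

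The only nontrivial ingredient is the explicit two-time joint law of each $H_{i,j}$; once that and cross-pair independence are in hand, the rest is bookkeeping. The main thing to be careful about is that the expansion of $(1-\prod)(1-\prod)$ keeps a clean product form factor-by-factor in $j$, which is what allows the final expression to stay as a product over $j$ with $x_{j,f}=1$ rather than dissolving into an unwieldy sum over subsets.
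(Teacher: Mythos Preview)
Your proposal is correct and follows essentially the same route as the paper: both arguments pass to stationarity, use Fubini together with the product form $H^f_i=1-\prod_j(1-H_{i,j})$ and pairwise independence to reduce to the two-state CTMC law, invoke the explicit transition probability $P_{00}(u)=[\lambda^C_{i,j}+\lambda^I_{i,j}e^{-(\lambda^C_{i,j}+\lambda^I_{i,j})u}]/(\lambda^C_{i,j}+\lambda^I_{i,j})$, and then apply the change of variable $u=\tau-t$ to produce the $(T^d-u)$ factor. Your accounting of how the $(T^d)^2$ and $-2(T^d)^2\prod_j\pi_0$ terms cancel after subtracting $(\mathbb{E}[T^c_{i,f}])^2$ is slightly more explicit than the paper's, but the substance is identical.
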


\begin{proof}
See Appendix A.
\end{proof}
Since the communication time $T^c_{i,f}$ is a bounded random variable, we propose to approximate its distribution by a beta distribution, which is widely used to model the random variables limited to finite ranges. Specifically, we consider $T^c_{i,f} \approx T^d Y_{i,f}$, where $Y_{i,f} \sim B(\alpha_{i,f},\beta_{i,f})$, $i \in \mathcal{S}$ and $f \in \mathcal{F}$, if $\sum_{j \in \mathcal{S} \backslash \{i\}} x_{j,f}>0$, which means that user $i$ may download file $f$ from at least one user; otherwise, $T^c_{i,f}=0$. Let $\mathbb{E}[T^dY_{i,f}]=\mathbb{E}[T^c_{i,f}]$ and $\mathrm{Var}[T^dY_{i,f}]=\mathrm{Var}[T^c_{i,f}]$, and the parameters of the beta distribution to match the first two moments can be derived as\footnote{The parameters of the beta distribution should be positive, and it can be proved that $\alpha_{i,f}>0$ and $\beta_{i,f}>0$, by $e^{-u(\lambda_{i,j}^I+\lambda_{i,j}^C)} \le 1$. The detail is omitted due to the space limitation.}
\begin{equation} \label{beta_p}
\begin{cases}
\alpha_{i,f}=\frac{\mathbb{E}[T^c_{i,f}]^2 (T^d-\mathbb{E}[T^c_{i,f}])}{ \mathrm{Var}[T^c_{i,f}]T^d}-\frac{\mathbb{E}[T^c_{i,f}]}{T^d} \\
\beta_{i,f}=\frac{T^d-\mathbb{E}[T^c_{i,f}]}{\mathbb{E}[T^c_{i,f}]} \alpha_{i,f}
\end{cases}
\end{equation}  

\subsection{Data offloading ratio approximation}  
Based on the above approximation, we get an approximate expression of the data offloading ratio in Proposition \ref{E_od}. Simulations will show that the approximation is quite accurate.
\begin{prop} \label{E_od}
The data offloading ratio is approximated as
\begin{align} \label{ratio}
\mathcal{P} = &\frac{1}{N_u}\sum \limits_{i \in \mathcal{S}} \sum \limits_{f \in \mathcal{F}} p^r_{i,f} \left[ x_{i,f} + (1-x_{i,f}) \mathcal{P}_{i,f} \right],
\end{align}
where $\mathcal{P}_{i,f}$ is the data offloading ratio when user $i$ requests file $f$, which is not in its own cache, approximated by
\begin{align} \label{ex_g}
&\mathcal{P}_{i,f} \approx 1-I_{\frac{C}{T^dR}} (\alpha_{i,f},\frac{T^d-\mathbb{E}[T^c_{i,f}]}{\mathbb{E}[T^c_{i,f}]} \alpha_{i,f}) \notag \\
& +\frac{\mathbb{E}[T^c_{i,f}]R}{C} I_{\frac{C}{T^dR}}(\alpha_{i,f}+1,\frac{T^d-\mathbb{E}[T^c_{i,f}]}{\mathbb{E}[T^c_{i,f}]} \alpha_{i,f}) \Big]  \Big\},
\end{align}
if $\sum_{j \in \mathcal{S} \backslash \{i\}} x_{j,f}>0$ and $0$ elsewhere, where $I_r(\cdot,\cdot)$ is the incomplete beta function, and $\alpha_{i,f}$ is given in (\ref{beta_p}).
\end{prop}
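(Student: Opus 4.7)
The plan is to start from the defining expression (\ref{define_ratio}) and observe that the only nontrivial quantity to evaluate, for each $(i,f)$ with $x_{i,f}=0$, is $\mathbb{E}\bigl[\min(RT^c_{i,f},C)\bigr]/C$. When $\sum_{j\in\mathcal{S}\setminus\{i\}}x_{j,f}=0$, no neighbor holds the file, so $T^c_{i,f}=0$ and this contribution vanishes, yielding the ``$0$ elsewhere'' case. In the remaining case I would substitute the beta approximation $T^c_{i,f}\approx T^d Y_{i,f}$ with $Y_{i,f}\sim B(\alpha_{i,f},\beta_{i,f})$ from Section III-A, so that the task becomes computing
\begin{equation}
\frac{1}{C}\mathbb{E}\bigl[\min(RT^dY_{i,f},C)\bigr].
\end{equation}

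Next I would split this expectation at the threshold $y_0=C/(T^dR)$, which is the value of $Y_{i,f}$ at which $RT^dY_{i,f}=C$. Writing $f(y;\alpha,\beta)$ for the beta density, the truncation gives
\begin{equation}
\mathbb{E}\bigl[\min(RT^dY_{i,f},C)\bigr]=RT^d\!\int_0^{y_0}\! y\,f(y;\alpha_{i,f},\beta_{i,f})\,dy+C\bigl(1-I_{y_0}(\alpha_{i,f},\beta_{i,f})\bigr),
\end{equation}
where the second term uses the fact that the CDF of the beta distribution is the regularized incomplete beta function $I_{y_0}(\alpha_{i,f},\beta_{i,f})$.

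The core algebraic step is to recognize the identity
\begin{equation}
y\,f(y;\alpha,\beta)=\frac{\alpha}{\alpha+\beta}\,f(y;\alpha+1,\beta),
\end{equation}
which follows from the relation $B(\alpha+1,\beta)/B(\alpha,\beta)=\alpha/(\alpha+\beta)$. Applied to the first integral, this yields $\int_0^{y_0}y\,f(y;\alpha_{i,f},\beta_{i,f})\,dy=\frac{\alpha_{i,f}}{\alpha_{i,f}+\beta_{i,f}}\,I_{y_0}(\alpha_{i,f}+1,\beta_{i,f})$. The moment-matching equations (\ref{beta_p}) imply $\alpha_{i,f}/(\alpha_{i,f}+\beta_{i,f})=\mathbb{E}[T^c_{i,f}]/T^d$ and $\beta_{i,f}=\frac{T^d-\mathbb{E}[T^c_{i,f}]}{\mathbb{E}[T^c_{i,f}]}\alpha_{i,f}$, so the first term simplifies to $R\,\mathbb{E}[T^c_{i,f}]\,I_{y_0}(\alpha_{i,f}+1,\beta_{i,f})$.

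Dividing by $C$ and substituting $y_0=C/(T^dR)$ together with the above expression for $\beta_{i,f}$ reproduces (\ref{ex_g}) exactly, and then plugging back into (\ref{define_ratio}) gives (\ref{ratio}). I do not expect any serious obstacle: the analytic content is essentially the beta-density shift identity, and all remaining manipulations are bookkeeping. The only subtle point worth flagging explicitly is that the approximation sign in $\mathcal{P}_{i,f}\approx\cdots$ carries over only from the beta approximation to the true distribution of $T^c_{i,f}$; conditional on that approximation, the derivation of (\ref{ex_g}) is exact.
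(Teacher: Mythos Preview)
Your proposal is correct and follows exactly the route the paper intends: the paper's proof merely says ``Following (\ref{define_ratio}), (\ref{expectation_t}), (\ref{var_t}), and (\ref{beta_p}), the expression in (\ref{ratio}) can be obtained'' and omits all details, so your splitting of $\mathbb{E}[\min(RT^dY_{i,f},C)]$ at $y_0=C/(T^dR)$ together with the beta shift identity $y\,f(y;\alpha,\beta)=\tfrac{\alpha}{\alpha+\beta}f(y;\alpha+1,\beta)$ simply supplies those omitted details.
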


\begin{proof}
Following (\ref{define_ratio}),  (\ref{expectation_t}), (\ref{var_t}), and (\ref{beta_p}), the expression in (\ref{ratio}) can be obtained. Due to the space limitation, the detail is omitted.
\end{proof}

\section{Effect of Mobile User Speed}

In this section, we will consider a homogeneous case, where the contact and inter-contact parameters among all pairs of users are the same, i.e., $\lambda^C=\lambda^C_{i,j}>0$ and $\lambda^I=\lambda^I_{i,j}>0$, where $i \in \mathcal{S}$ and $j \in \mathcal{S} \backslash \{i\}$. We will investigate how the user moving speed affects the data offloading ratio for a given caching strategy. If all users cache the same contents, the D2D communications will not help the content delivery. Thus, in the following, we assume that the contents cached at different users are not all the same. This investigation will be based on the approximate expression in (\ref{ratio}), and simulations will be provided later to verify the results.

\subsection{Communication time analysis}
Under the above assumptions, the expectation and variance of the communication time can be simplified, as in the following corollary.
\begin{cor} \label{sim_ev}
When $\lambda^C=\lambda^C_{i,j}$ and $\lambda^I=\lambda^I_{i,j}$, where $i \in \mathcal{S}$ and $j \in \mathcal{S} \backslash \{i\}$, the expectation and variance of a user requests file $f$, which is not stored at its own cache, are given by
\begin{align}
&\mathbb{E}[T^c_{i,f}]=T^d\left[ 1-\left( \frac{\lambda^C}{\lambda^C+\lambda^I} \right)^{n_f} \right],  \label{expect}  \\
&\mathrm{Var}[T^c_{i,f}]= \left[ \frac{\lambda^C}{(\lambda^C+\lambda^I)^2} \right]^{n_f}
\sum \limits_{l=1}^{{n_f}}  \binom{{n_f}}{l} \frac{(\lambda^C)^{n_f-l} (\lambda^I)^{l}}{l(\lambda^C+\lambda^I)} \notag \\
& \quad \times \left[ T^d-\frac{1}{l(\lambda^C+\lambda^I)}+\frac{e^{-l(\lambda^C+\lambda^I)T}}{l(\lambda^C+\lambda^I)} \right], \label{variance}
\end{align}
where $i \in \mathcal{S}$ and $n_f=\sum \limits_{j \in \mathcal{S}} x_{j,f}$ denotes the number of users caching file $f$.
\end{cor}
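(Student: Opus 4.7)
The plan is to specialize Lemma 1 to the homogeneous case by direct substitution, so this is essentially a bookkeeping exercise rather than a new result. First, for the expectation, I would replace every $\lambda^C_{i,j}$ and $\lambda^I_{i,j}$ in (\ref{expectation_t}) by the common values $\lambda^C$ and $\lambda^I$. The product over $\{j \in \mathcal{S} : x_{j,f}=1\}$ then reduces to a constant raised to the $n_f$-th power (note that since $x_{i,f}=0$, all $n_f$ users caching $f$ are distinct from $i$), immediately yielding (\ref{expect}).

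For the variance, I would substitute the homogeneous parameters into (\ref{var_t}) and pull the constant factor $[\lambda^C/(\lambda^C+\lambda^I)^2]^{n_f}$ outside the integral. The crucial step is to expand the integrand $[\lambda^C + \lambda^I e^{-u(\lambda^C+\lambda^I)}]^{n_f}$ via the binomial theorem, producing
\[
\sum_{l=0}^{n_f}\binom{n_f}{l}(\lambda^C)^{n_f-l}(\lambda^I)^{l} e^{-lu(\lambda^C+\lambda^I)}.
\]
After exchanging the sum with the integral, each term reduces to $\int_0^{T^d}(T^d-u)e^{-lu(\lambda^C+\lambda^I)}\,du$, which can be handled by a single integration by parts. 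For $l \geq 1$ this yields the bracket $[T^d - 1/(l(\lambda^C+\lambda^I)) + e^{-l(\lambda^C+\lambda^I)T^d}/(l(\lambda^C+\lambda^I))]$ with prefactor $1/(l(\lambda^C+\lambda^I))$, which already matches the summand in (\ref{variance}).

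The one small subtlety I would flag explicitly is the $l=0$ term: it evaluates to $(T^d)^2/2$, and once multiplied by the outer factor of $2$ and by $[\lambda^C/(\lambda^C+\lambda^I)^2]^{n_f}(\lambda^C)^{n_f}$, it reproduces exactly $(T^d)^2 (\lambda^C/(\lambda^C+\lambda^I))^{2n_f}$, which cancels the subtracted term in (\ref{var_t}). That is why the sum in (\ref{variance}) can be restricted to $l \geq 1$. Apart from this cancellation, the derivation is purely mechanical; the main ``obstacle'' is simply keeping the binomial coefficients, powers, and exponential factors aligned so that the final expression is in the compact form stated in the corollary.
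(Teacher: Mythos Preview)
Your proposal is correct and follows exactly the route the paper takes: Appendix~A simply says to set $\lambda^C_{i,j}=\lambda^C$, $\lambda^I_{i,j}=\lambda^I$ in (\ref{expectation_t}) to obtain (\ref{expect}), and to do the same in (\ref{var_t}) ``with the binomial theorem'' to obtain (\ref{variance}). Your write-up is in fact more explicit than the paper's, since you spell out the $l=0$ cancellation with the subtracted $(T^d)^2$ term, which the paper leaves implicit.
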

\begin{proof}
	See Appendix A.
\end{proof}

\subsection{Mobile user speed}

We first characterize the relationship between the user speed and the parameters $\lambda^C$ and $\lambda^I$ in Lemma \ref{speed}.
\begin{lem} \label{speed}
When all the user speeds change by $s$ times, the contact and inter-contact parameters will also change by $s$ times, i.e., from $\lambda^C$ and $\lambda^I$ to $s\lambda^C$ and $s\lambda^I$, respectively.
\end{lem}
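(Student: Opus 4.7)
The plan is to reduce the claim to a simple time-rescaling argument on the underlying mobility trajectories, so that the scaling property of exponential distributions gives the result.

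First, I would make precise what "all user speeds change by $s$ times" means at the trajectory level. Let $\mathbf{r}_i(t)$ denote the position of user $i$ under the original mobility process; multiplying every user's instantaneous speed by $s$ (while keeping the path unchanged) corresponds to the new trajectory $\tilde{\mathbf{r}}_i(t) = \mathbf{r}_i(st)$. This is the crucial modeling step, because it identifies the speed-$s$ process with a pure time-change of the original process at rate $s$.

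Next, I would translate the time-change into contact and inter-contact durations. For any pair $(i,j)$, the users are in contact iff $\|\mathbf{r}_i(t)-\mathbf{r}_j(t)\|$ is within the transmission range. Under the rescaled trajectories, contact holds at time $t$ iff it held at time $st$ in the original process, so every original contact interval $[a,b]$ maps to $[a/s, b/s]$, and similarly for inter-contact intervals. Hence each contact duration $\xi_k$ becomes $\tilde{\xi}_k = \xi_k/s$, and each inter-contact duration $\eta_k$ becomes $\tilde{\eta}_k = \eta_k/s$.

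Finally, I would invoke the scaling property of the exponential distribution: if $\xi \sim \mathrm{Exp}(\lambda^C)$, then $\xi/s \sim \mathrm{Exp}(s\lambda^C)$, with density $s\lambda^C e^{-s\lambda^C x}$ for $x>0$. The same holds for $\eta/s \sim \mathrm{Exp}(s\lambda^I)$. Since the pairwise timelines were assumed independent in the system model, independence is preserved by the deterministic time rescaling, and so the new alternating renewal processes have contact and inter-contact parameters $s\lambda^C$ and $s\lambda^I$, which is exactly the claim.

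The only real obstacle is the modeling step: justifying that ``scaling speeds by $s$'' can be consistently interpreted as the deterministic time change $t \mapsto st$ applied to every user simultaneously. Once this is granted (it is the standard convention used implicitly in the inter-contact literature such as \cite{scalingmobility}), the rest is a one-line application of the exponential scaling identity, so I do not anticipate any further technical difficulty.
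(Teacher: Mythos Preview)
Your proposal is correct and follows essentially the same route as the paper: both argue that scaling every user's speed by $s$ scales each contact and inter-contact duration by $1/s$, and then use the exponential parametrization to conclude the rates become $s\lambda^C$ and $s\lambda^I$. The only cosmetic difference is that the paper phrases the time-scaling step via the curve integral $\int_{L_i}\frac{dz}{v_i(z)}$ along a fixed path, whereas you phrase it via the trajectory reparametrization $\tilde{\mathbf r}_i(t)=\mathbf r_i(st)$; these are equivalent formalizations of the same observation.
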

\begin{proof}
The time for user $i$ to move along a certain path $L_i$ can be given as a curve integral $\int_{L_i} \frac{dz}{v_i(z)}$, where $v_i(z)$ is the speed of user $i$ when passing by a point $z$ on the path $L_i$. When the speed of user $i$ changes by $s$ times, the time for moving along the path $L_i$ changes to $\int_{L_i} \frac{dz}{s v_i(z)}=\frac{1}{s}\int_{L_i} \frac{dz}{ v_i(z)}$, which scales by $\frac{1}{s}$ times. During each contact or inter-contact time, users $i$ and $j$ move along certain paths. When all the user speeds change by $s$ times, each contact or inter-contact time changes by $\frac{1}{s}$ times, and thus, the average ones change by $\frac{1}{s}$ times. Since the contact and inter-contact times are assumed to be exponential distributed with mean $\frac{1}{\lambda^C}$ and $\frac{1}{\lambda^I}$, respectively, the parameters $\lambda^C$ and $\lambda^I$ scale by $s$ times.
\end{proof}
Considering that a larger $s$ means that users are moving faster, in the following, we will investigate how changing  $s$ will affect the data offloading ratio. For simplicity, we assume that the transmission rate is a constant, and will not change with the user speed. This is reasonable in the low-to-medium mobility regime. Firstly, the effect of user speed on the communication time is shown in Lemma \ref{s_t} .
\begin{lem} \label{s_t}
	When $s$ increases, which is equivalent to increasing the user speed, the expectation of the communication time when a user $i \in \mathcal{S}$ requests file $f \in \mathcal{F}$ that is not in its own cache, i.e., $\mathbb{E}[T^c_{i,f}]$, keeps the same, and the corresponding variance, i.e., $\mathrm{Var}[T^c_{i,f}]$, decreases, if the number of users caching file $f$ is larger than $0$, i.e., $n_f>0$. Accordingly, the parameter $\alpha_{i,f}$ of the beta distribution increases.
\end{lem}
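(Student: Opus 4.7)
The plan is to push the invariance $s\lambda^C/(s\lambda^C+s\lambda^I)=\lambda^C/(\lambda^C+\lambda^I)$ through each of the three quantities in turn. Set $\mu = \lambda^C+\lambda^I$, $\pi_0 = \lambda^C/\mu$, $\pi_1 = \lambda^I/\mu$. By Lemma \ref{speed}, increasing $s$ sends $\mu \mapsto s\mu$ while leaving $\pi_0,\pi_1$ fixed. Substituting this into the closed form (\ref{expect}) immediately gives $\mathbb{E}[T^c_{i,f}] = T^d(1-\pi_0^{n_f})$, which is independent of $\mu$ and hence of $s$. That disposes of the first claim.

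The main work is the variance. I would not start from the series (\ref{variance}), which mixes the scaling-invariant $\pi_0$ with the scaling-active $\mu$ in an awkward way, but rather from the integral form in Lemma \ref{ev}, specialized to the homogeneous case. Using $\lambda^C/\mu^2 = \pi_0/\mu$ and $\lambda^C+\lambda^I e^{-u\mu}=\mu(\pi_0+\pi_1 e^{-u\mu})$, the product in (\ref{var_t}) collapses to $[\pi_0(\pi_0+\pi_1 e^{-u\mu})]^{n_f}$. Write $g(y)=[\pi_0(\pi_0+\pi_1 y)]^{n_f}$, so that
\begin{equation*}
\mathrm{Var}[T^c_{i,f}] = 2\int_0^{T^d}(T^d-u)\, g(e^{-u\mu})\,du - (T^d)^2 g(0).
\end{equation*}
Changing variables to $v=u\mu$ and letting $\tau = T^d\mu$ (so $\tau$ is the only $s$-dependent quantity, and $\tau \to s\tau$ as $s$ grows), I would rewrite the variance as $(T^d)^2 V(\tau)$ with
\begin{equation*}
V(\tau) = \frac{2}{\tau}\int_0^{\tau} g(e^{-v})\,dv \;-\; \frac{2}{\tau^2}\int_0^{\tau} v\,g(e^{-v})\,dv \;-\; g(0).
\end{equation*}

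The task is then to show $V'(\tau)<0$. A direct differentiation, with the boundary terms $g(e^{-\tau})/\tau$ cancelling between the two fractions, reduces to
\begin{equation*}
V'(\tau) = \frac{2}{\tau^{3}}\int_0^{\tau}(2v-\tau)\,g(e^{-v})\,dv.
\end{equation*}
This is the step I expect to be the main obstacle, since the integrand changes sign at $v=\tau/2$. The key observation is that $g(e^{-v})$ is \emph{strictly decreasing} in $v$ (because $\pi_1>0$ and $g$ is increasing in its argument), so the weighting gives strictly more mass to the region $v<\tau/2$, where $2v-\tau<0$. Formally, splitting at $\tau/2$ and bounding $g(e^{-v})$ above on $[0,\tau/2]$ by $g(e^{-\tau/2})$ (for the negative-integrand piece) and below on $[\tau/2,\tau]$ by the same value (for the positive piece) makes the integral dominated by $g(e^{-\tau/2})\int_0^{\tau}(2v-\tau)\,dv = 0$, with strict inequality on either side. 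Hence $V'(\tau)<0$, so the variance is strictly decreasing in $\tau$, and therefore in $s$.

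The statement about $\alpha_{i,f}$ is then immediate from (\ref{beta_p}): write
\begin{equation*}
\alpha_{i,f} = \frac{\mathbb{E}[T^c_{i,f}]^{2}\,(T^d-\mathbb{E}[T^c_{i,f}])}{T^d\,\mathrm{Var}[T^c_{i,f}]} - \frac{\mathbb{E}[T^c_{i,f}]}{T^d}.
\end{equation*}
The numerator of the first term is $s$-invariant (by the mean claim), its denominator contains the strictly decreasing $\mathrm{Var}[T^c_{i,f}]$, and the subtracted term is $s$-invariant; so $\alpha_{i,f}$ is strictly increasing in $s$, as required. The hypothesis $n_f>0$ is used only to guarantee that the variance is not identically zero, so that the beta approximation is nondegenerate and the sign of $V'(\tau)$ is genuinely strict.
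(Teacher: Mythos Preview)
Your argument is correct and takes a genuinely different route from the paper. The paper differentiates the binomial series (\ref{variance}) term by term in $s$; each term contributes a factor $\mathcal{A}_1(x)=-x-xe^{-x}-2(e^{-x}-1)$ with $x=sl(\lambda^C+\lambda^I)T^d$, and the proof reduces to the one-variable inequality $\mathcal{A}_1(x)<0$ for $x>0$, checked by $\mathcal{A}_1'(x)=-1+(1+x)e^{-x}<0$ and $\mathcal{A}_1(0)=0$. You instead go back to the integral form (\ref{var_t}), collapse all $s$-dependence into the single parameter $\tau=T^d\mu$, and differentiate $V(\tau)$ directly; the resulting sign question $\int_0^\tau(2v-\tau)g(e^{-v})\,dv<0$ is a clean Chebyshev-type inequality (a strictly decreasing weight against an odd kernel about $\tau/2$). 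Your route avoids the binomial expansion entirely and makes the scale-invariance of the mean and the scale-activity of the variance transparent; the paper's route is more mechanical but stays closer to the explicit formula (\ref{variance}) already on record.

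One small wording slip: in the splitting argument you say you bound $g(e^{-v})$ \emph{above} on $[0,\tau/2]$ by $g(e^{-\tau/2})$, but on that interval $g(e^{-\tau/2})$ is a \emph{lower} bound for $g$; what is being bounded above is the (negative) integrand $(2v-\tau)g(e^{-v})$. The inequality you need, $(2v-\tau)g(e^{-v})\le(2v-\tau)g(e^{-\tau/2})$ for all $v\in[0,\tau]$ with strict inequality on a set of positive measure, is exactly what the monotonicity of $g(e^{-v})$ gives, so the conclusion stands. Also note that $n_f>0$ is doing slightly more than ensuring a nondegenerate beta fit: it is precisely what makes $g(e^{-v})$ \emph{strictly} decreasing (via $\pi_1>0$ and $n_f\ge1$), which is what forces $V'(\tau)<0$ rather than $\le0$.
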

\begin{proof}
	See Appendix B.
\end{proof}
Then, we evaluate the relationship between $\alpha_{i,f}$ and the data offloading ratio when user $i$ requests file $f$ that is not in its own cache, i.e., $\mathcal{P}_{i,f}$ in (\ref{ex_g}), in Lemma \ref{s_g}.
\begin{lem} \label{s_g}
	When user $i \in \mathcal{S}$ requests file $f \in \mathcal{F}$ and cannot find it in its own cache, the data offloading ratio, i.e., $\mathcal{P}_{i,f}$, increases with $\alpha_{i,f}$.
\end{lem}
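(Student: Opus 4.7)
The plan is to recast $\mathcal{P}_{i,f}$ as the expectation of a concave piecewise-linear function against a one-parameter family of beta random variables sharing a common mean, and then deduce monotonicity in $\alpha_{i,f}$ from a convex-order-style argument on that family.

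First, since Lemma~\ref{s_t} tells us that $\mathbb{E}[T^c_{i,f}]$ is unaffected when $\alpha_{i,f}$ varies (through the speed scaling $s$), the constant $k := (T^d - \mathbb{E}[T^c_{i,f}])/\mathbb{E}[T^c_{i,f}]$ appearing in (\ref{beta_p}) is held fixed. Writing $\alpha := \alpha_{i,f}$, the moment-matched approximation then reads $T^c_{i,f} \approx T^d Y_\alpha$ with $Y_\alpha \sim B(\alpha, k\alpha)$. Putting $r := C/(RT^d) \in (0,1)$ and $\phi(y) := \min(y/r,\,1)$ yields $\mathcal{P}_{i,f}(\alpha) = \mathbb{E}[\phi(Y_\alpha)]$. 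The function $\phi$ is concave and piecewise linear on $[0,1]$, while every $Y_\alpha$ shares the common mean $p := 1/(1+k) = \mathbb{E}[T^c_{i,f}]/T^d$.

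Next I would compare densities via the likelihood ratio. For $\alpha_2 > \alpha_1 > 0$,
$$\frac{f_{\alpha_2}(y)}{f_{\alpha_1}(y)} = K\bigl[y(1-y)^k\bigr]^{\alpha_2-\alpha_1}$$
for a positive normalising constant $K$. Because $y(1-y)^k$ is strictly unimodal on $(0,1)$ with maximum precisely at $y=p$, the ratio is unimodal in $y$, so $f_{\alpha_2} - f_{\alpha_1}$ changes sign at most twice. Combined with the two linear constraints $\int (f_{\alpha_2}-f_{\alpha_1})\,dy = 0$ (common mass) and $\int y(f_{\alpha_2}-f_{\alpha_1})\,dy = 0$ (common mean), plus the fact that $f_{\alpha_1}\not\equiv f_{\alpha_2}$, the sign pattern on $(0,1)$ is forced to be exactly $(-,+,-)$ across some points $0 < y_1 < p < y_2 < 1$.

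Finally I would close with a secant-line argument. Let $L$ be the affine function agreeing with $\phi$ at $y_1$ and $y_2$; concavity of $\phi$ gives $\phi \ge L$ on $[y_1,y_2]$ and $\phi \le L$ on its complement in $[0,1]$. Since $L$ is affine, the two zero-integral identities imply $\int L\,(f_{\alpha_2}-f_{\alpha_1})\,dy = 0$, hence
$$\mathcal{P}_{i,f}(\alpha_2)-\mathcal{P}_{i,f}(\alpha_1) = \int_0^1\bigl[\phi(y)-L(y)\bigr]\bigl[f_{\alpha_2}(y)-f_{\alpha_1}(y)\bigr]\,dy.$$
On $[y_1,y_2]$ both bracketed factors are non-negative, and on the complement both are non-positive, so the integrand is pointwise non-negative and strictly positive on a set of positive measure, giving strict monotonicity of $\mathcal{P}_{i,f}$ in $\alpha$.

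The main obstacle is the sign-pattern step: ruling out the reverse pattern $(+,-,+)$ and justifying that exactly two crossings occur. This is where the unimodality of $y(1-y)^k$ combined with the two global constraints is essential---the unimodality caps the crossings at two, and the two linear constraints each kill one of the remaining admissible configurations. Should this route prove brittle in corner regimes, the backup is direct differentiation, $\frac{d}{d\alpha}\mathbb{E}[\phi(Y_\alpha)] = \mathrm{Cov}_\alpha\!\bigl(\phi(Y_\alpha),\,\log Y_\alpha + k\log(1-Y_\alpha)\bigr)$, paired with a joint shape analysis of $\phi$ and $h(y) := \log y + k\log(1-y)$, both of which are unimodal with easily locatable peaks.
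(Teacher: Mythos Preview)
Your argument is correct and takes a genuinely different route from the paper. The paper in fact proceeds by your backup plan: it differentiates $g(\alpha)=1-\mathcal{P}_{i,f}$ directly, obtaining an expression involving $D(y,\alpha)=\psi(\alpha)+y\psi(y\alpha)-(1+y)\psi((1+y)\alpha)$ with $\psi$ the digamma function, and then pins down the sign of $g'(\alpha)$ through a chain of auxiliary functions $\mathcal{A}_2(r)$, $\mathcal{A}_2'(r)$, $\mathcal{A}_2''(r)$, $\mathcal{A}_3(r)$, tracking boundary values at $r\to0^+$ and $r=1$ and counting inflection points in $r$. Your primary approach instead recognises the structural fact that $\{B(\alpha,k\alpha):\alpha>0\}$ is a common-mean family decreasing in the convex order as $\alpha$ grows, and that $\mathcal{P}_{i,f}=\mathbb{E}[\phi(Y_\alpha)]$ with $\phi$ concave. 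The two-crossing/secant-line device you give is a clean proof of that ordering; the unimodality of $y(1-y)^k$, whose peak sits exactly at $p=1/(1+k)$, already forces the $(-,+,-)$ pattern (a $(+,-,+)$ pattern would require the likelihood ratio to exceed $1$ in both tails, impossible for a unimodal ratio peaked at an interior point and vanishing at the endpoints), so the obstacle you flag is not a genuine difficulty. Your route is shorter, more conceptual, and extends verbatim to any concave payoff in place of $\min(y/r,1)$; the paper's calculation is self-contained but substantially more intricate and tied to the specific form of $\phi$. One minor point: you need not invoke Lemma~\ref{s_t} to justify holding $k$ fixed---the expression (\ref{ex_g}) already treats $\mathbb{E}[T^c_{i,f}]$ as a separate input, so $k$ is constant by the very formulation of the lemma.
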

\begin{proof}
	See Appendix C.
\end{proof}
Base on Lemmas \ref{s_t} and \ref{s_g}, we can specify the effect of user speed on the data offloading ratio in Proposition \ref{s_d}.
\begin{prop}\label{s_d}
	If the transmission rate does not change with the user speed, and the average contact and inter-contact times among all the pairs are the same, the data offloading ratio increases with the user moving speed.
\end{prop}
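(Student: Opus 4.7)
The plan is to chain the three preceding lemmas together via the decomposition of $\mathcal{P}$ in (\ref{ratio}), and then invoke the standing assumption that the contents cached at different users are not all identical to ensure that at least one summand in the decomposition is strictly sensitive to the user speed.

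First I would translate a change of speed into a change of parameters. By Lemma \ref{speed}, scaling every user's speed by a common factor $s$ is equivalent to replacing $(\lambda^C,\lambda^I)$ by $(s\lambda^C,s\lambda^I)$ throughout, while leaving $R$, $T^d$, $C$, $\{x_{j,f}\}$ and $\{p^r_{i,f}\}$ untouched; here the constancy of $R$ under rescaling is precisely the low-to-medium mobility assumption stated in the proposition. Showing that $\mathcal{P}$ increases with user speed therefore reduces to showing that the expression (\ref{ratio}) is monotonically nondecreasing in $s$, with strict increase whenever D2D delivery is possible for at least one requested file.

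Next I would partition the double sum in (\ref{ratio}) according to three mutually exclusive cases. (i) If $x_{i,f}=1$, the summand contributes $p^r_{i,f}$, which is independent of $s$. (ii) If $x_{i,f}=0$ and $n_f=0$, then no other user caches $f$ and $\mathcal{P}_{i,f}=0$ by the boundary clause in Proposition \ref{E_od}, again independent of $s$. (iii) If $x_{i,f}=0$ and $n_f\ge 1$, Lemma \ref{s_t} yields that $\mathbb{E}[T^c_{i,f}]$ is invariant under the rescaling while $\mathrm{Var}[T^c_{i,f}]$ strictly decreases, so the shape parameter $\alpha_{i,f}$ of the matched beta distribution in (\ref{beta_p}) strictly increases with $s$. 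Lemma \ref{s_g} then supplies $\mathcal{P}_{i,f}$ as a strictly increasing function of $\alpha_{i,f}$, so every case (iii) summand is strictly increasing in $s$.

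Finally, the assumption that the contents cached at different users are not all identical guarantees the existence of some file $f^\star$ with $1\le n_{f^\star}\le N_u-1$, and hence of some index $i^\star$ with $x_{i^\star,f^\star}=0$ and $n_{f^\star}\ge 1$; its $(i^\star,f^\star)$ summand lies in case (iii), promoting the weak monotonicity obtained from cases (i)--(ii) into the desired strict monotonicity of $\mathcal{P}$. Essentially no genuine obstacle remains at the level of Proposition \ref{s_d} itself: the real technical work has been absorbed into Lemma \ref{s_t}, where one must verify that the bracketed expression in (\ref{variance}) shrinks under $(\lambda^C,\lambda^I)\mapsto(s\lambda^C,s\lambda^I)$, and into Lemma \ref{s_g}, which requires differentiating the incomplete-beta expression in (\ref{ex_g}) with respect to $\alpha_{i,f}$ and checking the sign. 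Once those two ingredients are in place, Proposition \ref{s_d} follows by the book-keeping outlined above.
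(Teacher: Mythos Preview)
Your proposal is correct and follows essentially the same route as the paper: decompose the sum in (\ref{ratio}) term-by-term, observe that summands with $x_{i,f}=1$ or $n_f=0$ are $s$-invariant, combine Lemmas \ref{s_t} and \ref{s_g} to get monotonicity on the remaining summands, and invoke the ``not all caches identical'' assumption to exhibit at least one strictly increasing term. Your write-up is in fact slightly more explicit than the paper's, in that you separately invoke Lemma \ref{speed} to justify the $(\lambda^C,\lambda^I)\mapsto(s\lambda^C,s\lambda^I)$ reduction and cleanly partition the index set into the three cases, whereas the paper's Appendix D leaves these steps implicit.
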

\begin{proof}
	See Appendix D.
\end{proof}
\begin{remark}
	The result in Proposition \ref{s_d} is valid for any caching strategy, only excluding the case that all the users have the same cache contents.
\end{remark}

\section{Simulation results}
In the simulation, the content request probability follows a Zipf distribution with parameter $\gamma_r$, i.e., $p_f=\frac{f^{-\gamma_r}}{\sum \limits_{i \in \mathcal{F}} i^{-\gamma_r}}$, $f \in \mathcal{F}$ \cite{d2d-cache}. Meanwhile, each user caches 5 contents, and a random caching strategy is applied \cite{randomcache}, where the probabilities of the contents cached at each user are proportional to the file request probabilities.

\begin{figure}[!t]
  \centering
  \includegraphics[width=2.6in]{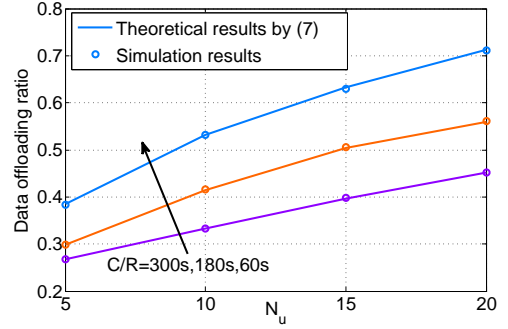}
  \caption{Data offloading ratio with $N_f=100$, $T^d=300s$ and $\gamma_r=0.6$.}
  \label{fig_scheme1}
\end{figure}

Fig. \ref{fig_scheme1} validates the accuracy of the approximation in (\ref{ratio}). The inter-contact parameters $\lambda^I_{i,j}, i \in \mathcal{S}, j \in \mathcal{S} \backslash \{i\}$ are generated according to a gamma distribution as $\Gamma(4.43,1/1088)$ \cite{aggregate_real}. Similar as \cite{renewalmodel}, we assume the average of the contact parameters are $5$ times larger than the inter-contact parameters. Thus, the contact parameters are generated∂ according to $\Gamma(4.43 \times 25,1/1088/5)$. It is shown from Fig. \ref{fig_scheme1} that the theoretical results are very close to the simulation results, which means the approximate expression (\ref{ratio}) is quit accurate. Furthermore, the data offloading ratio increases with the number of users, which is brought by the increasing aggregate caching capacity and the content sharing via D2D links.
\begin{figure}[!t]
  \centering
  \includegraphics[width=2.6in]{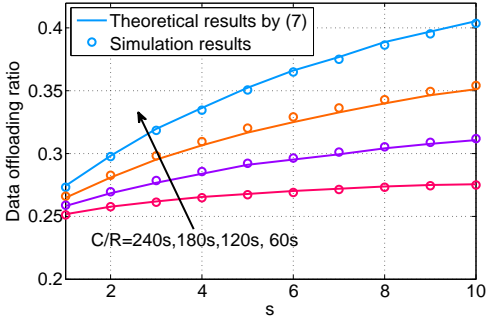}
  \caption{Data offloading ratio with $N_u=15$, $\lambda^C=0.001s$, $\lambda^I=0.0002s$, $N_f=100$, $T^d=300s$ and $\gamma_r=0.6$.}
  \label{fig_scheme2}
\end{figure}

In Fig. \ref{fig_scheme2}, the effect of $s$ is demonstrated, where increasing $s$ is equivalent to increasing the user speed. Firstly, the small gap between the theoretical and simulation results again verifies the accuracy of the approximate expression in (\ref{ratio}). It is also shown that the data offloading ratio increases with $s$, which confirms the conclusion in Proposition \ref{s_d}.  Moreover, from Fig. \ref{fig_scheme2}, the increasing rate of the data offloading ratio is decreases with the user moving speed.


\section{conclusions}

In this paper, we investigated the effect of user mobility on the caching performance in a D2D caching network. The communication time of a given user was firstly approximated by a beta distribution, through matching the first two moments. Then, an approximate expression of the data offloading ratio was derived. For a homogeneous case, where the average contact and inter-contact times are the same for all the user pairs, we evaluated how the user moving speed affects the data offloading ratio. Specifically, it was proved that the data offloading ratio increases with the user speed, assuming that the transmission rate is irrelevant to the user speed. Simulation results validated the accuracy of the approximate expression of the data offloading ratio, and demonstrated that the data offloading ratio increases with the user speed, while the increasing rate decreases with the user speed.

\section*{Appendix}
\subsection{Proof of Lemma \ref{ev} and Corollary \ref{sim_ev}}
As the timeline of different user pairs are independent, the expectation of the communication time when user $i$ requests file $f$, which is not in its own cache, can be written as 
\begin{equation}
\mathbb{E} [ T^c_{i,f} ]=\lim \limits_{T^r_{i,f} \to \infty} \int _{T^r_{i,f}}^{T^r_{i,f}+T^d} \left[ 1- \prod \limits_{j \in \mathcal{S},x_{j,f}=1}\left(1- \mathbb{E} H_{i,j}(t) \right) \right] dt. 
\end{equation}
Since the timeline between each pair of users is modeled as an alternating renewal process, according to \cite{renewalprocess}, we have $\lim \limits_{t \to \infty} \Pr[H_{i,j}(t)=1]=\frac{\lambda^I_{i,j}}{\lambda^C_{i,j}+\lambda^I_{i,j}}$. Thus, $\lim \limits_{t \to \infty} \mathbb{E}[ H_{i,j}(t)]=\frac{\lambda^I_{i,j}}{\lambda^C_{i,j}+\lambda^I_{i,j}}$, and then, the expectation in (\ref{expectation_t}) can be obtained. Let $\lambda^C=\lambda^C_{i,j}$ and $\lambda^I=\lambda^I_{i,j}$, and we can get the expression in (\ref{expect}).
The variance of the communication time is 
\begin{align} \label{v}
&\mathrm{Var} [ T^c_{i,f} ]= \notag \\
&2 \lim \limits_{T^r_{i,f} \to \infty}  \int_{T^r_{i,f}}^{T^r_{i,f}+T^d} \int_{T^r_{i,f}}^{\tau} \Pr[H^f_{i}(t)=1,H^f_{i}(\tau)=1] dtd\tau \notag \\
&-\left( \mathbb{E} [ T^c_{i,f} ] \right)^2
\end{align}
According to \cite{renewalprocess}, $\Pr[H_{i,j}(\tau)=0|H_{i,j}(t)=0]=\frac{\lambda^C_{i,j}}{\lambda^C_{i,j}+\lambda^I_{i,j}}+\frac{\lambda^I_{i,j}}{\lambda^C_{i,j}+\lambda^I_{i,j}}e^{-({\lambda^C_{i,j}+\lambda^I_{i,j}})(\tau-t)}$. Then, when $T_{i,f}^r \to \infty$, we can get
\begin{align} \label{prob}
&\Pr[H^f_{i}(\tau)=1,H^f_{i}(t)=1] 
=1-2 \prod \limits_{j \in \mathcal{S},x_{j,f}=1}\frac{\lambda^C_{i,j}}{\lambda^C_{i,j}+\lambda^I_{i,j}} \notag \\
& \quad + \prod \limits_{j \in \mathcal{S},x_{j,f}=1} \frac{\lambda^C_{i,j}}{(\lambda^I_{i,j}+\lambda^C_{i,j})^2}\left[ \lambda^C_{i,j} + \lambda^I_{i,j} e^{-(\lambda_{i,j}^C+\lambda^I_{i,j})(\tau-t)} \right]
\end{align}
Let $u=\tau-t$ and substitute (\ref{prob}) into (\ref{v}), and we can get (\ref{var_t}). Let  $\lambda^C=\lambda^C_{i,j}$ and $\lambda^I=\lambda^I_{i,j}$, and we can get (\ref{variance}) with the binomial theorem.

\subsection{Proof of Lemma \ref{s_t}}

When the user speed changes by $s$ times, the expectation of the communication time in (\ref{expect}) keeps the same, while the variance changes to
\begin{align}
&\mathrm{Var}[T^c_{i,f}]= \left[ \frac{\lambda^C}{(\lambda^C+\lambda^I)^2} \right]^{n_f}
\sum \limits_{l=1}^{{n_f}}  \binom{{n_f}}{l} \frac{(\lambda^C)^{n_f-l} (\lambda^I)^{l}}{sl(\lambda^C+\lambda^I)} \notag \\
& \quad \times \left[ T^d-\frac{1}{sl(\lambda^C+\lambda^I)}+\frac{e^{-sl(\lambda^C+\lambda^I)T^d}}{sl(\lambda^C+\lambda^I)} \right],
\end{align}
To prove that $\mathrm{Var}[T^c_{i,f}]$ decreases with $s$, we will prove that $\frac{
\partial \mathrm{Var}[T^c_{i,f}]}{\partial s}<0$. The partial derivation of $\mathrm{Var}[T^c_{i,f}]$ is
\begin{align} \label{dir}
&\frac{\partial \mathrm{Var}[T^c_{i,f}]}{\partial s}= \notag \\
&\left[ \frac{\lambda^C}{(\lambda^C+\lambda^I)^2} \right]^{n_f}
\sum \limits_{l=1}^{{n_f}}  \binom{{n_f}}{l} \frac{(\lambda^C)^{n_f-l} (\lambda^I)^{l}}{s^3l^2(\lambda^C+\lambda^I)^2} \mathcal{A}_1(x),
\end{align}
where
$\mathcal{A}_1(x)=-x-x e^{-x}-2(e^{-x}-1)$ and $x=sl(\lambda^C+\lambda^I)T^d>0$.
Since $\mathcal{A}'_1(x)=-1+(1+x) e^{-x} < -1+(1+x) \frac{1}{1+x}=0$, $\mathcal{A}_1(x)$ is a decreasing function of $x$. Thus, $\mathcal{A}_1(x)<\mathcal{A}_1(0)=0$. According to (\ref{dir}), when $n_f>0$, we have $\frac{
\partial \mathrm{Var}[T^c_{i,f}]}{\partial s}<0$. The parameter $\alpha_{i,f}$ given in (\ref{beta_p}) is a decreasing function of $\mathrm{Var}[T^c_{i,f}]$, and thus increases with $s$.

\subsection{Proof of Lemma \ref{s_g}}
To simplify the expression in (\ref{ex_g}), denote $r \triangleq \frac{C}{T^d R} \in (0,1)$, $y \triangleq \frac{T^d-\mathbb{E}[T^c_{i,f}]}{\mathbb{E}[T^c_{i,f}]} \ge 0$, and $\alpha \triangleq \alpha_{i,f}$. The expression in (\ref{ex_g}) can be rewritten as a function of $\alpha$, given as
\begin{align}
	&\mathcal{P}_{i,f}=1- \frac{\int_0^r (1-\frac{u}{r}) u^{\alpha-1} (1-u)^{y \alpha-1} du }{B(\alpha,y\alpha)}.
\end{align}
Let $g(\alpha)=1-\mathcal{P}_{i,f}$, the derivation of $g(\alpha)$ is
\begin{align}
	&g'(\alpha)= \notag \\
	&\frac{1}{B(\alpha,y\alpha)} \Bigg\{\int_0^r (1-\frac{u}{r}) u^{\alpha-1} (1-u)^{y \alpha-1}[\ln u + y \ln (1-u)] du \notag \\
	&- \int_0^r (1-\frac{u}{r}) u^{\alpha-1} (1-u)^{y \alpha-1} du D(y,\alpha) \Bigg\},
\end{align}
where $D(y,\alpha)=\psi(\alpha)+y \psi(y\alpha)-(1+y)\psi[(1+y)\alpha]$ and $\psi(\cdot)$ is the digamma function. If $r=1$, $g'(\alpha)=\frac{\partial [y/(1+y)]}{\partial \alpha}=0$.
Denote $\mathcal{A}_2(r)=\frac{B(\alpha,y\alpha)}{r}g'(\alpha)$, $\mathcal{A}_2(1)=0$ and
\begin{align} 
	&\lim \limits_{r \to 0^{+}} \mathcal{A}_2(r)= \notag \\
&\lim \limits_{r \to 0^{+}} \int_0^r (r-u) u^{\alpha-1} (1-u)^{y \alpha-1}[\ln u + y \ln (1-u)] du \notag \\
\end{align}
Since $ r \ge u \ge 0$ and $y \ge 0$, $(r-u) u^{\alpha-1} (1-u)^{y \alpha-1} \ge 0$ and $\ln u + y \ln (1-u) \le 0$, thus, $\lim \limits_{r \to 0^{+}} \mathcal{A}_2(r) \le 0$. The derivation of $\mathcal{A}_2(r)$ is 
\begin{align}
	\mathcal{A}'_2(r)= &\int_0^r u^{\alpha-1} (1-u)^{y \alpha-1}[\ln u + y \ln (1-u)] du \notag \\
	&- \int_0^r u^{\alpha-1} (1-u)^{y \alpha-1} du D(y,\alpha).
\end{align}
Thus, $\mathcal{A}'_2(1)=\frac{\partial B(\alpha,y \alpha)}{\partial \alpha}-\frac{\partial B(\alpha,y \alpha)}{\partial \alpha}=0$ and $\lim \limits_{r \to 0^{+}} \mathcal{A}'_2(r) \le 0$.
Then, we can get $\mathcal{A}''_2(r)= r^{\alpha-1} (1-r)^{y \alpha-1}[\ln r + y \ln (1-r)-D(y,\alpha)]$. Let $\mathcal{A}_3(r)=r^{1-\alpha} (1-r)^{1-y \alpha} \mathcal{A}''_2(r)$, then, there is one zero point of $\mathcal{A}'_3(r)=\frac{1-(1+y)x}{x(1-x)}$ in $(0,1]$. Thus, there is one inflection point of $\mathcal{A}_3(r)$. Considering that $\lim \limits_{r \to 0^{+}}\mathcal{A}_3(r)=\lim \limits_{r \to 1^{-}}\mathcal{A}_3(r)=-\infty$, the sign of $\mathcal{A}_3(r)$ may be negative, or first negative, then positive, and then negative, while $r$ increases in $(0,1)$. If $\mathcal{A}_3(r)<0$, then $\mathcal{A}''_2(r)<0$ when $r \in (0,1)$. However, we have $\lim \limits_{r \to 0^{+}} \mathcal{A}'_2(r) \le \mathcal{A}'_2(1)$, which means that $\mathcal{A}'_2(r)$ can not be a decreasing function in $(0,1)$. Thus, the sign of $\mathcal{A}_3(r)$ is first negative, then positive, and then negative, while $r$ increases in $(0,1)$. Since $\mathcal{A}''_2(r)$ has the same sign with $\mathcal{A}_3(r)$ in $(0,1)$, $\mathcal{A}'_2(r)$ first decreases, then increases, and then decreases while $r$ increases in $(0,1)$. Considering that $\lim \limits_{r \to 0^{+}} \mathcal{A}'_2(r) \le 0$ and $\mathcal{A}'_2(1)=0$, the sign of $\mathcal{A}'_2(r)$ must be first negative, and then positive in $(0,1)$. Therefore, while $r$ increases in $(0,1)$, $\mathcal{A}_2(r)$ first decreases, and then increases. Considering that $\lim \limits_{r \to 0^{+}} \mathcal{A}_2(r) \le 0$ and $\mathcal{A}_2(1)=0$, we have $\mathcal{A}_2(r)<0$ in $(0,1)$ and $\mathcal{A}_2(r)=0$ when $r=1$. Since $g'(\alpha)=\frac{r}{B(\alpha,y\alpha)} \mathcal{A}_2(r)$, we get $g'(\alpha)<0$ in $(0,1)$. Thus, $g(\alpha)$ decreases with $\alpha$, and $\mathcal{P}_{i,f}=1-g(\alpha)$ increases with $\alpha$.

\subsection{Proof of Proposition \ref{s_d}}
The data offloading ratio in (\ref{ratio}) increases with the increasing of $\mathcal{P}_{i,f}$ if $x_{i,f}=0$, $i \in \mathcal{S}$, $f \in \mathcal{F}$. Then, based on Lemmas  \ref{s_t} and \ref{s_g}, we can get that the data offloading ratio when user $i$ requests file $f$ from other users, i.e., $\mathcal{P}_{i,f}$, decreases with the user speed when $n_f>0$, otherwise $\mathcal{P}_{i,f}=0$. Accordingly, the data offloading ratio when user $i$ requests file $f$, i.e., $x_{i,f}+ (1-x_{i,f}) \mathcal{P}_{i,f} $, increases with the user speed when $x_{i,f}=0$ and $n_f>0$; otherwise, it keeps the same, where $i \in \mathcal{S}$, $f \in \mathcal{F}$. Since we consider that not all the users cache the same contents, there must exists $i' \in \mathcal{S}$, $j' \in \mathcal{S}$ and $f' \in \mathcal{F}$, where $x_{i',f'}=0$ and $x_{j',f'}=1$, i.e, $n_{f'}>0$. Thus, the data offloading ratio increases with the user speed.

\bibliographystyle{IEEEtran}
\bibliography{IEEEabrv,report}
\end{document}